\date{}
\author{Yong Tan\\~\\
\emph{entermarket@163.com}}
\title{Study Morphology of Minimum Spanning Tree Problem and Generalized Algorithms}
\theoremstyle{plain}
\newtheorem{theorem}{Theorem}
\newtheorem{lemma}{Lemma}
\newtheorem{definition}{Definition}
\begin{document}
\maketitle 
\begin{abstract}
In this paper, we study the form over the \emph{minimum spanning tree problem} (MST) from which we will derive an intuitively generalized model and new methods with the upper bound of runtimes of \emph{logarithm}. The new pattern we made has taken successful to better equilibrium the benefits of local and global when we employ the strategy of \emph{divide and conquer} to optimize solutions on problem. Under new model, we let the course of clustering become more transparent with many details, so that the whole solution may be featured of much reasonable, flexibility, efficiency and approach to reveal or reflect the reality. There are some important methods and avenues as fruits derived from discussions or trial which can be broad usefulness in the fields of graphic analysis, data mining, \emph{k-means} clustering problem and so forth.
\end{abstract}

~\newline
\textbf{\small{Keywords:}} \emph{Minimum Spanning Tree; Generalized Algorithm.}
\section{Introduction}
MST problem has been disposed to have a broad application in practice, such as communication, network problem, biology, clustering, data mining and decision problem else. The problem could be described as: \textquotedblleft given an instance, every edge on it involves of a weight which can be a number or a vector. MST is such a tree that concurrently is of a cutting graph out of original ones. On MST, there are minimum total weight and maximum number of endpoints that means including all nodes for exact one.\textquotedblright 

Meanwhile it is a problem how to acquire the optimum decision that presents a course of aggregation of either edges or nodes. In \cite{2}, MST problem were introduced to a special instance of GMST (generalized minimum spanning tree) problem that has had being a \emph{NP-hard} problem so far. Anyway, peoples all and all at present like to follow a classic way to solve this kind of problems, which course always starts at someone as a \emph{seed} and incident to a set $G^{\prime}$ created of singleton automatically. Thereafter the singleton $G^{\prime}$ sets out to absorb the other nodes unceasingly which is in the complement $G\setminus G^{\prime}$. The course of merger will continue until $G^{\prime}$ absorbs the final element.

When we use a linear algorithm to solve this kind problem, we may likely meet some hardship. For example, underneath theoretical pattern of GMST, with a small input, we may suffer huge scale computing out of probability of combination or permutation. That case would request our resource of computation much and readily lead to exceed our capacity. 

Hereby peoples are disposed to employ a strategy we call \emph{divide-and-conquer}. The core idea of this measure is of partition a big scale (global) problem to several sub-problem (local) and, in order to get the overall optimum through per local optimization on each part; for a typical example \emph{dynamical program} by using recursive operator. But meanwhile this methodology has to meet another new problem, even is so serious that how to cope with the benefits on \emph{local} and \emph{global}. Sometimes there is a big distinct among what our earning and reality, simply speaking, when we figure the sum of results out of several parts, this process may not surely be a simple calculation of adding them together. Perhaps is a polynomial or even to be a mistake at beginning because of short of necessary context that we have not known. Certainly our solution could be likely conducted to lose some information or relationships among those divided parts within computing process. Then what strategy we need becomes vital important to us, which can guarantee the ultimate fruit is correct. 

Hence, studying some natural-born properties on problem oneself to gain a better and generic strategy manifests a critical significance to our resolution.

~\newline
\textbf{Related Works.} When we talk about MST, we must have to refer to two traditional and typical algorithms \emph{Kruskal}\textquoteright s algorithm\cite{3} and \emph{Prim}\textquoteright s algorithm\cite{4}. They are distinct at solvable course, remarkable of: the fore one is there a pretreatment on \emph{raw} data, where \emph{Kruskal} sort those weights in order of ascent which is incident to an edge set. The sequent work is further of picking up edges into $G^{\prime}$, naturally as the order of ascent on their incident weights as \emph{greedy method} via \emph{lesser-first}. The later one randomly chooses an endpoint as seed into $G^{\prime}$, of course a singleton being created thereof. The sequence is set $G^{\prime}$ continues to augment by absorbing endpoints one by one till to include all of nodes. In this form, for any candidate of edge such that one end has been taken into set $G^{\prime}$ but another outside. \emph{Prim} will compare those outer suchlike in order to enumerate a minimum one among them as an eligible endpoint to absorb. 

In sum, Kruskal\textquoteright s measure needs sorting all weights and must to settle their incident edges, so the potential complexity of such pretreatment may at most implicate to \emph{square} of input, as though the later subroutine of merger could be conceded as a logarithm $O(m\log n)$ over runtime where $m,n$ are the amounts of edges and endpoints respectively. The Prims\textquoteright thus must suffer $n$th bouts of perhaps about $m$ times of enumerating, testing and receiving so that runtimes has still got indirect to $O(mn)$ at worst case. 

Otherwise, they similarly meet some potential cost more and less that may lead to consume more resources in general case: when many candidates of edge are incident to a same number simultaneously, the two methods have must to pay for a higher expenditure to deal with the benefit between local and global to ensure the local selection would not impact the correctness of overall result. Thus we can see their claims in papers always emphasis: distinct edge associated to distinct weight.

On the other hand, there is a shortage to such two measures that there are two sets \emph{selected} $G^{\prime}$ and \emph{unselected} $G\setminus G^{\prime}$ throughout the whole course of merger, so that it is weighted of heavy scarcely to present a conglomeration of many clusters. In \cite{2}, many researchers attempted to attribute those kinds of problems to a typical process of clustering, where they even said there exists such initial situation that those prime nodes all are incident to respective singleton. Hence, as to MST problem, they thought about what was just a special case in GMST. \cite{2}

As to others efficient methods\cite{8, 9, 10}, their devotions and works were disposed to reform the procedure or process for example Harold N. Gabow et al\cite{5} used the \emph{Fibonacci heap} to speed up the execution by accessing minimum weight and incident edge at $O(1)$ to improve the runtime, and alleged to obtain the optimum of $\log^{(i)}=\log\log\ldots.\log$ by this way. But the instance must satisfy something. 

In\cite{6}, Karger et al \emph{with high probability in the restricted random-access model} attempted to exploit the probability of sampling unavailable edge to accelerate the process of cutting original graph till to a MST which can be said \emph{whole-to-part} style, which is difference to those precedents at an utterly inverse strike. 

Today in this paper, we find another way that to survey the morphology about this problem. To our algorithms, they would be abstracted to an order of first \emph{whole} to \emph{local} and second \emph{bottom} to \emph{top}. Otherwise to convenience for exposition, we will divide the whole course into two steps: the first is \emph{node} stage that there is a task of which all of nodes will have been set in $K$ clusters. The value of $K$ is not given by us in advance but inside instance possessed of inherent characters or relationships among those nodes. Then after is \emph{cluster} stage in which clusters will merge one another until an utter MST appears.

~\newline
\emph{Our paper is outlined} as: the next section is \emph{preliminary} to introduce background knowledge, terminology and some claims in routine. The third section will contain definitions and proofs about the form of MST to achieve the task of \emph{whole} to \emph{local}. The new algorithms will be expounded with \emph{pseudo} code. And they will be discussed of over those respects of complexity, optimization, theoretical proofs, trial and some fruits else. The final section will give some conclusions.

\section{Preliminary}
When we refer to a graph, the term $G=(V, \tau)$ is a basic form used by us. The instance $G$ is consisted of node set $V$, arc set $\tau$. According to convention, we let $n=\vert V\vert$ which variable $n$ is of cardinal of set $V$. It is reasonable that we allow all nodes on instance maps to a natural number queue: $1,2,3,\cdots, n$ as their alias. Then we can use an index array to present such data structure for easy to access.

An \emph{arc} between a pair of nodes $u,v\in V$ usually is characterized by an order pair of $(u,v)$ to express a direction from $u$ to $v$, unless otherwise definition given. Another traditional notion of \emph{edge}, in fact, it contains a pair of arcs both strikes is one another opposite. 

Given a group of numbers $W$ and we allow each member $\omega\in W$ is at least incident to an edge $e$ on instance, we call such graph \emph{weighted-edge graph}. The weight $\omega$ actually is also in concurrent effect to two arcs contained in edge $e$. Otherwise we claim that every weight is \emph{positive} greater than zero, which is permitted of our consensus about MST problem.

There is a particular data structure \emph{united subgraph}\cite{1} $s_{i}= R(i)\times L(i)$ for $u\in R(i), v\in L(i)$ such that $(u,v)\in\tau$. If we view set $\tau$ as a family, then the united subgraph $s_{i}$ would be the \emph{set partition}\cite{1} on $\tau$, which presents a Cartesian product of $R(i), L(i)$; moreover set $R(i)$ is a singleton. Therefore 
this special form can be described of a \emph{star-tree} where many arcs are consisted of unique \emph{root} in root set $R(i)$ and many leaves in leaf set $L(i)$, and to show a root onto many \emph{leaves}. We let $ m=\vert L(i)\vert $ where $m$ represents the size of a leaf set $L(i)$. Then we let $E=nm$ as an approach formula to cast the many of arcs on instance. 

Finally, we can define a total weight as: for a path or tree $P=(e_{1}, e_{2},\ldots,e_{k})$, the total weight $W_{P}$ is the sum of weights incident to edges that has been in path $P$: $W_{P}=\sum_{i=1}^{k}\omega_{e_i}$ for $\forall e_i\in P$. 

\section{Morphology}
Based on our consensus at MST problem, given an instance should be ascribed to a \emph{simple graph} where the every connected relationship between nodes $u,v$ must be followed to an edge $e$ with dual channels connecting two ends. The traditional concept \emph{undirected} in effect is dualistic strikes sharing a weight so that we can utilize the united subgraph structure to characterize the local benefit done for every node that as a root taken onto many leaves, unless isolated ones. In\cite{5} Gabow used this measure alike to lift up the capacity of obtaining edge and incident weight which is in Fibonacci-heap constructed of arcs and their incident weights.

Considers the data structure of \emph{united subgraph} formatted by $s= R\times L$, we may describe a relation among subgraph $s$ and its incident weight set $W^{\prime}$. Further there may be a mapping function $C$ and its action as 
\begin{flalign}\label{d1}
C_{s}\colon \alpha\rightarrow \omega\vert&~ \omega\in W^{\prime}\text{ and }\alpha\in s&\\
&\text{s.t. }\vert s\vert\geq\vert W^{\prime}\vert\text{ and }W^{\prime}\subseteq W.&\nonumber
\end{flalign}
We call set $W^{\prime}$ \emph{local weight set}. We further sort these numbers in local weight set $W^{\prime}$, as such we can have some definitions as follows.

\subsection{Definitions}
Follows the above-mentioned function\eqref{d1}, we may produce a triple array $M$ consisted of nodes and weight as follow.
\begin{flalign}\label{eq1}
M=&(r,\ell,\omega)\\
\text{s.t.}&\quad\nonumber\text{(1) }\forall s_{i}\subseteq\tau \Rightarrow r\in R_{i} \text{ and }\ell\in L_{i}\\
&\quad\text{(2) }\omega\in W_{i}\text{ and }W_{i}\subseteq W.\nonumber
\end{flalign}
~\newline
To the term\eqref{eq1}, we call it \emph{Arc Weight Triple} abbr. AWT. It is obvious for such logical structure that we can take subgraph $s$ and incident local weight set $W^{\prime}$ to consist a group of AWTs, so that these AWTs share a same root $r$. 

Hence in set $W^{\prime}$, while we seek out a special kind of AWT who does with minimum weight thereof, we may describe a \emph{local benefit} for node $r$, and we can as such treat other nodes. 

Since that, we have the first definition about the local benefit at node $r$ as follows.

\begin{definition}\label{d1}
Consider a AWT $(r,\ell,\omega)$ with $\omega$ is the minimum member in set $W_{i}$, we call $r$ be \emph{subject} into leaf $\ell$, denoted by $r\rightharpoonup\ell$ and, we call weight $\omega$ \emph{Minimal Vassal Cost}, abbr. \emph{MVC}. Meanwhile we also denote MVC by $\omega_{r\ltimes\ell}$, on the other hand we use the term $\omega_{r}^{\ltimes}$ to represent a pure quantity of MVC not to shoot any specific leaf.
\end{definition}

This definition is readily taken to comprehend: we can suppose the leaf $\ell$ has been in some entity. When the entity plans to absorb root $r$, by the edge $\{r,\ell\}$ between them, the number of MVC $\omega_{r}^{\ltimes}$ could be referred to a minimum cost as input for some computation. But swap $r$ and $\ell$ positions in the fore process, which is from root $r$ to absorb leaf $\ell$, the situation may be not as same as the fore case. Because there may be $\omega_{\ell}^{\ltimes}<\omega_{r}^{\ltimes}$, it seemingly does not to satisfy the condition of by minimum cost. We call this asymmetric course \emph{invert pitfall}.

Since a AWT consisting of root $r$ and leaf $\ell$ concurrently, then there may be a case of $\omega_{r\ltimes\ell}$ and $\omega_{\ell\ltimes r}$ existing concurrently. It is apparent for the edge $\{r,\ell\}$ to be a minimal bridge between two nodes. We define such structure as well as follows.

\begin{definition}\label{d2}
Following the definition\eqref{d1}, consider $\exists\omega_{r\ltimes\ell}\in W_{r}$ and $\exists\omega_{\ell\ltimes r}\in W_{\ell}$ on instance, we denote them by $r\rightleftharpoons \ell$ or $\ell\rightleftharpoons r$. And say such abreast relation refers to a \emph{beam structure}.
\end{definition}

Now we use MVC to construct an abstracted framework that characterized as to the so-called local benefit, and take it to bond with the based structure of united subgraph. Hereby we call this new pattern \emph{fleet model}. 

Of course, the fleet model actually is a cutting graph abstracted out of original instance as well, we denote it by $\mathcal{M}$.

On some intuitive sense, we can readily give some features around this new pattern as follows, which are easy to proven and we forbear to go into details for them.
\begin{enumerate}
\item The new relation $\mathcal{M}$ should cover every node $r$ on instance $G$ if and only if node $r$ is not isolated without any neighbor. That said the leaf set $L_{r}\neq\varnothing$.

\item Consider a MVC $\omega_{r\ltimes\ell}$, we have $\omega_{r}^{\ltimes}\geq\omega_{\ell}^{\ltimes}$. 

\item Consider a triple nodes $s,u,t$ satisfies the form $s\rightharpoonup u, u\rightharpoonup t$, such that there is a due \emph{transitive} relation $\omega_{s}^{\ltimes}\geq\omega_{u}^{\ltimes}\geq\omega_{t}^{\ltimes}$.

\item Consider a MVC $\omega_{r\ltimes\ell}$, it is slightly to prove if there is an entity on instance to merge with node $r$, the cost is cheapest to pass through the arc $(\ell,r)$ to absorb. Of course, this feature is guaranteed to hold by the context of that sum of several weights must be monotonic of increment: $P^{\prime}\subseteq P\Rightarrow W(P^{\prime})\leq W(P)$. 

\item We call a group consisting of nodes and MVC as \emph{flotilla}, anyway the minimum flotilla just can be a \emph{beam} with a couple of nodes.
\end{enumerate}
~\newline
We have finished the prime stage work of constructing basic framework. This abstract comes out of instance which can be said of \emph{global to local} alike. That cutting graph $\mathcal{M}$ is possessed of mixed direction with both styles of \emph{directed} and \emph{bidirected}. In fleet model, it is secondary for us to concern if there is any cycle on graph $\mathcal{M}$. This case is completely distinct with method of MSF\cite{8}, which was summaried over $Bor\mathring{u}vka$\textquoteright s method by Seth Pettie et al \cite{8}.

In addition, graph $\mathcal{M}$ is not thus to said as a MST that just a model described of a total local benefit. On the other hand, by the $5^{\text{th}}$ feature above, graph $\mathcal{M}$ can be a group of many flotillas where those flotillas are separate one another. As to any flotilla, the MSF is just a subset of a flotilla, so that we need chase those strings on it. 

Then we need first to deal with the problem of invert pitfall: a node $\ell$ can absorb node $r$, but how about to settle the node $\ell$; what is about to happen on it? These questions may be taken to involve with the concrete form of those flotillas within the range of graph theory as like as geometry that is an intuitive model. That will be the contents of our research in next session.

\subsection{Proof}
In order to facilitate describing those relations among nodes in pertinence and intuition, we call a node \emph{boat} or \emph{towboat}, well is it clear for term $r\rightharpoonup \ell$ that hints the leaf $\ell$ is a towboat hauling the fore boat $r$. Certainly they both concurrently may be towboats to each other if they are two ends of a same beam. Then a \emph{flotilla} is an intricate group organized by boats (nodes), \emph{ropes} (arcs) and strains (weights). We reserve traditional word \emph{cluster} to express the meaning of a collection of pure nodes. 

Consider a flotilla $F=b_1,b_2,\ldots,b_{\kappa}$, we can suppose there is a sequence
\begin{flalign*}
\Gamma=b_1\rightharpoonup b_2\rightharpoonup\ldots\rightharpoonup b_{(\kappa - 1)}&\rightharpoonup b_{\kappa}&\\
&\text{ for }1\leq\kappa\leq\infty&
\end{flalign*}
By the above $3^{\text{rd}}$ feature, we can have a sequence with comparative manner among them
\begin{flushleft}
$\Omega = \omega_{1 }^{\ltimes} \geq \omega_{2}^{\ltimes}\geq\ldots\geq\omega_{(\kappa - 1)}^{\ltimes}\geq\omega_{\kappa}^{\ltimes}.$
\end{flushleft}
And it is clear that those member in sequence $\Omega$ may be mapped to a queue of discrete scalars 
\begin{flushleft}
$W^{\/*}=x_1, x_2,\ldots,x_{j}$\quad for $j\leq\kappa $. 
\end{flushleft}
When $\vert W^{\/*}\vert = 1$, it knows for $\forall\omega_{i}^{\ltimes},\omega_{j}^{\ltimes}\in\Omega (i\neq j)$ such that $\omega_{i}^{\ltimes}= \omega_{j}^{\ltimes}$. Then we have a result that flotilla $F$ is a complete group of beams and the sequence\textquoteright s length $\vert \Gamma\vert$ can be \emph{infinite}. 

Likewise, an infinite beam group can be sited at the medium position of the queue $\Omega$, but the series $W^{\/*}$ still may be a finite set with $\vert W^{\/*}\vert\leq \kappa$. 

Besides those discussions about infinite set, we only treat those sets of finite and the queue in $\Omega$ should stop at $b_{\kappa}$ as a strictly order form
\begin{flushleft}$ \Omega = \omega_{1 }^{\ltimes} > \omega_{2}^{\ltimes}>\ldots>\omega_{(\kappa - 1)}^{\ltimes}>\omega_{\kappa}^{\ltimes}.$\end{flushleft}
And sequence $W^{\/*}$ as a series has $\vert W^{\/*}\vert= \kappa$. Hereby we can suppose all of above collections that their own cardinals would no longer increase to $\kappa + 1$. Based on the assumption, we will figure out the sequence and further through a proof on a theorem to obtain a conclusion in morphology.

\begin{theorem}\label{t1}
Consider $\Gamma=(\rightharpoonup, b_{i})_{1\leq i\leq\kappa}$ as a part of flotilla and sequence $\Omega= (>,\omega_{i}^{\ltimes})_{1\leq i\leq\kappa}$ is incident to $\Gamma$. If set $\Omega$ is finite then the end of sequence $b_{\kappa}\in\Gamma$ would at least be a member in a beam.
\end{theorem}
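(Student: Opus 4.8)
The plan is to treat the relation $\rightharpoonup$ as a pointer that assigns to each non-isolated node the neighbour realizing its \emph{MVC}, and then to argue that the terminus $b_{\kappa}$ of a strictly decreasing strand must point to a node that points straight back, i.e.\ must lie in a beam $\rightleftharpoons$ of Definition~\ref{d2}. Two facts already granted in the excerpt will carry the weight. By the $1^{\text{st}}$ feature, $b_{\kappa}$ sits in a flotilla, so $L_{b_{\kappa}}\neq\varnothing$ and $b_{\kappa}$ genuinely has an MVC together with a neighbour attaining it. By the $2^{\text{nd}}$ feature, whenever $r\rightharpoonup\ell$ one has $\omega_{r}^{\ltimes}\geq\omega_{\ell}^{\ltimes}$, which is the monotonicity I will invoke at the boundary.

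First I would fix the neighbour $w$ realizing the MVC of $b_{\kappa}$, so that $b_{\kappa}\rightharpoonup w$ and the bridge $\{b_{\kappa},w\}$ carries weight exactly $\omega_{b_{\kappa}}^{\ltimes}$. Since that same edge is incident to $w$, its own cheapest incident weight cannot exceed it, so $\omega_{w}^{\ltimes}\leq\omega_{b_{\kappa}}^{\ltimes}$; this is precisely the $2^{\text{nd}}$ feature read off $b_{\kappa}\rightharpoonup w$.

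Next I would show the inequality is in fact an equality. The node $w$ cannot coincide with any earlier strand node $b_{i}$, $i<\kappa$, because the strictness of $\Omega$ forces $\omega_{b_{i}}^{\ltimes}>\omega_{b_{\kappa}}^{\ltimes}\geq\omega_{w}^{\ltimes}$, and of course $w\neq b_{\kappa}$ since a bridge joins two distinct ends. Were $\omega_{w}^{\ltimes}<\omega_{b_{\kappa}}^{\ltimes}$ strict, appending $b_{\kappa}\rightharpoonup w$ would prolong the strictly decreasing sequence to a $(\kappa+1)$-th term, contradicting the standing assumption that the cardinals do not grow to $\kappa+1$ and that the queue stops at $b_{\kappa}$. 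Hence $\omega_{w}^{\ltimes}=\omega_{b_{\kappa}}^{\ltimes}$.

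Finally I would cash in the equality: the bridge $\{b_{\kappa},w\}$ now has weight $\omega_{b_{\kappa}}^{\ltimes}=\omega_{w}^{\ltimes}$, so it also realizes the MVC of $w$, whence $w\rightharpoonup b_{\kappa}$ holds alongside $b_{\kappa}\rightharpoonup w$; by Definition~\ref{d2} this is the beam $b_{\kappa}\rightleftharpoons w$, and $b_{\kappa}$ is a member of it. I expect the genuine obstacle to be the equality step: one must simultaneously rule out a strictly cheaper escape (blocked by the finite cap at $\kappa$) and a fold-back onto a strictly more expensive strand node (blocked by strictness of $\Omega$), so that an equal-cost partner is the only survivor. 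It is here that finiteness and the strict order do the real work, and I would also be careful to record that $\rightharpoonup$ is single-valued under the distinct-weight convention adopted for MST, so that $w$ is well defined and the equal-weight bridge is forced to be $w$'s MVC rather than merely one of several minima.
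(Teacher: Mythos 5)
Your proposal is correct and follows essentially the same route as the paper's own proof: fix the neighbour realizing the MVC of the terminal node $b_{\kappa}$, observe $\omega_{w}^{\ltimes}\leq\omega_{b_{\kappa}}^{\ltimes}$, rule out strict inequality because it would extend the strictly decreasing sequence past $\kappa$ (contradicting the standing maximality/finiteness assumption), and conclude that equality forces the shared edge to realize both MVCs, i.e.\ a beam. Your write-up is in fact tighter than the paper's (you explicitly exclude $w$ coinciding with earlier strand nodes and justify the beam from the equality), though your closing appeal to a \emph{distinct-weight convention} is unnecessary and slightly at odds with the paper, which expressly allows repeated weights --- the argument goes through with $w$ chosen as any minimum-realizing neighbour.
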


\begin{proof}
Follows the above assumption, there is at least the relationship hold for two nature number $1\leq s,t\leq\kappa$ with $ t - s = 1$, such that $\omega_{s\ltimes t}> \omega_{t}^{\ltimes}$. it is inferable to confirm sequence $\Omega$ would converge at $\omega_{\kappa}^{\ltimes}$, if and only if sequences $\Omega, \Gamma$ are finit, otherwise there is always a more less number $\varepsilon < \omega_{\kappa}^{\ltimes}$, then let $\exists (b_{\kappa}, b_{(\kappa+1)}, \omega_{\kappa\ltimes(\kappa+1)})\in G$ and $\omega_{(\kappa+1)}^{\ltimes}= \varepsilon$, then either sequence of $\Gamma$ or $\Omega$ should be \emph{infinite}.

Further consider the ends at both queues $\Omega, \Gamma$, they are $\omega_{\kappa}^{\ltimes}$ and $b_{\kappa}$ respectively; it is deterministic that there is a result of $\omega_{(\kappa - 1)}^{\ltimes} > \omega_{\kappa}^{\ltimes}$. That says there is existence of an \emph{leaf} $\ell$ about $b_{\kappa}$ but not in sequence $\Gamma$, and the leaf $\ell$ subjects $b_{\kappa}$ to itself. So we can have $\omega_{\kappa}^{\ltimes}\geq \omega_{\kappa\ltimes\ell}$ to characterize what happen at the end $b_{\kappa}$: there will be two conclusions, someone is correct for either $\omega_{\kappa}^{\ltimes} = \omega_{\ell }^{\ltimes}$ or $\omega_{\kappa}^{\ltimes}>\omega_{\ell }^{\ltimes}$.

If the later one is correct that is clear to have a more less number $\varepsilon$ to let two sequences cannot be converge at the end $b_{\kappa}$. That so far is contradiction to the initial premise of $\Gamma, \Omega$ as a \emph{finite} set. 

On contrary, although the node $b_{\kappa}$ maybe stays in an infinite structure of beam with sharing a same number, but the case could not affect that prerequisite of sequence $\Omega$ being finite with relation $>$ and converge at $\omega_{\kappa}^{\ltimes}$. 

In turn, the above proof shows that the existence of beam in the flotilla also is the necessary condition for $\vert W^{\/*}\vert < \vert \tau\vert /2$. 

\end{proof}
~\newline
In\cite{8}, Seth and Vijaya had expressed suchlike thought approaching to this theorem\eqref{t1} above: there is a troublesome cycle with a heaviest weight which part has been in MSF $G^{\prime}$. They exclusively mentioned to cutting the heaviest edge which is involved to so-called DJP algorithm made by Jarnik[1930] and rediscovered by Dijkstra[1959] and Prim[1957]\cite{8}. 

We may describe such case by theorem\eqref{t1} model as well, Where are two sequences $\gamma_1$ and $\gamma_1$ to construct a cycle and their two ends are connected in a beam respectively, certainly the either one should be with greater cost which is that heaviest edge concerned by Seth. 

For the lightest beam on cycle, the beam is with minimal cost less than and equal to others. This form is just proven by theorem\eqref{t1}. Consequently, if the track is done along the monotonic string out of heaviest beam, we will steady and surely reach the lightest ones. Hereby we call lightest beam \emph{top} of flotilla, in turn called \emph{bottom} to heaviest. It is obvious that these past algorithms else run around this form for example Kruskal\textquoteright s measure is typical top to bottom.

Follows this form we can construct our methods as \emph{local} to \emph{global} on two phases. The first is similar to $Bor\mathring{u}vka$\textquoteright s method\cite{8}: the node stage in our method is contrasting to \emph{MSF phase} rose by Seth. And the distinct on both methods is that our ones can take nodes out in batch within per loop; no longer need to compare the whole stuffs that have ever been contracted for enumerate only eligible one. On the second stage, the operation becomes to cluster merger, but still we will redo our strategy on those clusters as it has been to nodes. Eventually the procedure would halt because there only one cluster be turned out.

\section{Algorithms}
Firstly follows the form out of theorem\eqref{t1}, we can make a \emph{certificate} function to charge nodes. It is obvious for a top beam in a flotilla, those nodes in it cannot be subjected to others that with greater MVCs. If our strategy of collection is ruled to the law of from \emph{top to bottom} and \emph{peer to peer} in beam, then we could avoid the invert pitfall and the course of reaping is safety too. 

In the implemented course, we surely need to set up a mechanism to prevent MSF from cycle produced by procedure. So we will have some contents to talk about that problem in below session.

\subsection{Pseudo Code and Proof}
We define a function \emph{Charge} to score a node over its quality as follows:
\begin{flushleft}
$Charge(r, \ell)=\left\{
\begin{array}{rl}
1 &\text{ if } \exists r\rightharpoonup\ell \text{ and }\omega_{r}^{\ltimes}>\omega_{\ell}^{\ltimes};\\
2 &\text{ if } \exists\ell\rightharpoonup r\text{ and }\omega_{r}^{\ltimes}<\omega_{\ell}^{\ltimes};\\
3 &\text{ if } \exists r\rightharpoonup\ell \text{ and }\omega_{r}^{\ltimes}=\omega_{\ell}^{\ltimes};
\end{array}\right.$
\end{flushleft}
~\newline
It is easy to realize the meaning of three values to present: 1 says $r$ is a \emph{boat} to leaf $\ell$; 2 is root $r$ as a \emph{towboat} to haul leaf $\ell$; 3 presents they are a \emph{beam}. By the theorem\eqref{t1}, we just pluck these nodes of both two types of score 3 and 3 joined of 2 as the prime point for start survey. Then we have the pseudo code of \emph{node stage} as follows.
\begin{enumerate}
\item Scans every AWT $(r,\ell,\omega)$ consisting of subgraph $s$ and incident local cost set $W^{\prime}$; selects the MVC thereof and record it as $\Omega[r] = \varw_{r}^{\ltimes}$. Meanwhile records the relations about $\rightharpoonup$ on graph $\xi$ that is an empty in initial phase but with same structure as original instance $G$. The formatted storage is $\xi[\ell][i]=r$ where it is only for root $r$ being written into leaf $\ell$\textquoteright s leaf set, and beam is presenting of mutual notch. In this subroutine, we may get a graph $\mathcal{M}=(V,\xi)$.

\item Uses function $Charge$ on all of nodes. In graph $\mathcal{M}=(V,\xi)$, we score root $r$ over \emph{towboat} or \emph{beam}, in turn, score leaf $\ell$ over \emph{boat}. Of course, this behavior is interactive to each other and, we can make a search list $N$ recording of available nodes for start conglomeration over each flotilla.

\item Traversal in list $N$, certainly the chase is also in graph $\mathcal{M}$. With the root and leaf set, the reaping is though root to absorb its own leaves. The process is iterative that these leaves will then become new roots to absorb their own leaves respectively. Our tactic is of from top to bottom or peer to peer and forbid an inverse strike of \emph{down to up}, then more practical execution is carried on by compare those MVCs of root $r$ and leaf $\ell$, and other weights in AWT consisting of root $r$ and leaf $\ell$. 

\item In the reaping course, we use an array $\overline{\xi}$ that holds a same structure as original graph $G$ to record the edge plucked by us, which is formed as united subgraph as $\overline{\xi}[r][i]=\ell$, $\overline{\xi}[\ell][j]=r$. Meanwhile, we use a \emph{n}th index array $C$ store those nodes for several clusters, and responding to prepare an auxiliary array $C^{\/*}$ to store the offsets of clusters in array $C$. Then we gain a group of MSFs $\varg =(C, \overline{\xi})$.

\item In order to manage the node merger course, we devise a gadget to prevent course from repeated absorbing. We use a variable \emph{counter} whom can increase by automatically oneself adding, we employ this variable to produce natural number \emph{id} not only for several clusters, but for as such tag every node bond with some cluster, which we use an index array $\mathcal{I}$ to store these information as $\mathcal{I}[node]=id$. Consequently the program could learn the status about a node through array $\mathcal{I}$, and further by the status to prevent program from repeated receiving and producing cycle in new MSF.
\end{enumerate}
~\newline
\textbf{Discussion.} Above all is runtime \emph{complexity}, it is considerably simple that on the level of encode, our algorithm maybe roughly divide the whole course into three parts which every phase has to scan the whole or partial data, of course including sets $\tau, W$. The complexity naturally is $O(3E)$. 

We call this method \emph{Oriented Abstracted Gradient}, abbr. \emph{OAG} that means we convert those concrete numbers into a fleet model with abstracted relation of \emph{greater, less} and \emph{equal}, finally we achieve the calculation relied on the quality of convergence in sequence.

About the \emph{output} in node stage, we reap a cutting graph $\varg=(C, \overline{\xi})$ and a group of auxiliary variables, including $counter$, $C^{\/*}$ and $\mathcal{I}$. That is to say our fruits resemble \cite{8}\textquoteright s group of MSFs, even to the process of algorithms, both chasing are done along the gradient. But the core thought of ours\textquoteright is remarkable of distinct and similar to \cite{8}:

\begin{enumerate}
\item Our algorithm rears on an arbitrary instance where many distinct edges maybe share with a same weight; contrast to the past methods, ours is much more to be possessed of generalized. Frankly speaking, by theorem\eqref{t1}, that is easier to derive the conclusion there must be no cycle in any flotilla underneath the past context about edge and weight.

\item In \cite{8}, Seth raised to employ \emph{decision tree} to solve a troublesome problem, that in MSF made by Seth, program could not identify the edges in which MSF when they only concern to contract nodes at first stage. Hence, this sequent caused Seth to pay off a large expenditure on computable resource. He had to suggest the opinion about \emph{dense} $m/n$ (here \emph{m} is the amount of edges) must be $\leq\log^{(3)}n$, or else a unknown sequent of vast computing effort could not be afforded which is induced by necessary to entirely traverse a decision tree. In turn, we wield the united subgraph to solve this problem: to master arc equal to manipulate edge. Our fruits out of node stage would guarantee us to rapidly access the information at a lower cost in the next stage, because the form of data structure over a graph is still to keep throughout whole course. 

\item In same way, Seth similar suggested the variable $counter = \log^{(3)}n$, by this settlement to deal with computing decision tree. It is in practice to take a hardship to choose a \emph{proper} instance to us.

\item Taken together, there are only two types for a cycle: one is two paths start at a same endpoint, and both stop at another same point, at the medium where there is not any intersection amid both intervals. Second is such a path where any point on it as a start point also becomes the end point by traversal along the path. Seth\cite{8} exploited two statuses \emph{live} and \emph{dead} to tag node to avoid cyclic happen. We use the id of clusters to indicate status for nodes, actually ours does with the same meaning to Seths\textquoteright.
\end{enumerate}

Now, we have obtained a group $\varg$ of MSFs, which could be controlled by maneuver on array $C^{\/*}$ and $\mathcal{I}$. About the form of our MSF, in fact, the similar proof has been done by Seth. But via different context, we have to do it again by a lemma.

\begin{lemma}\label{l1}
After a method \emph{OAG} executing on a given graph $G$, a component $\varg_i$ in result $\varg=(C,\overline{\xi})$ is a MSF.
\end{lemma}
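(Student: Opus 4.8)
The plan is to reduce the lemma to the classical \emph{cut property} of minimum spanning trees and to use Theorem~\ref{t1} together with the features of the fleet model to supply the two ingredients an MSF demands: that every edge the node stage retains is \emph{safe} (it lies in some MST of $G$), and that the retained edges of a single component form a spanning \emph{tree} of that component's node set. Once both are in hand the conclusion follows from the standard inductive invariant of Bor\mathring{u}vka-type arguments, namely that a set of edges which is simultaneously acyclic, jointly contained in one MST of $G$, and connected on each component it touches is a minimum spanning forest of those components.

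First I would settle safety. By construction every edge written into $\overline{\xi}$ is the MVC edge $\omega_{r\ltimes\ell}$ of some root $r$, i.e., the lightest edge incident to $r$. Taking the trivial cut that isolates $r$ from $V\setminus\{r\}$, the MVC edge is by definition a minimum-weight edge crossing this cut, so the cut property places it in some MST of $G$. This is exactly the content of the fourth feature of the fleet model, that the cheapest route into $r$ runs through its MVC arc, so every edge of $\varg_i$ is safe.

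Next I would show that the retained edges of one component are acyclic and connected. Reading $\xi$ as the functional graph in which each node sends a single arc to its MVC target, the second and third features force the MVC values to be non-increasing along every arc, and Theorem~\ref{t1} guarantees that each finite flotilla terminates at a beam; hence the only cycles this graph admits are the terminal beam $2$-cycles sitting at the convergence point $\omega_{\kappa}^{\ltimes}$ of each sequence $\Omega$. A flotilla on $\kappa$ nodes therefore carries $\kappa$ directed arcs, of which the top beam contributes one shared undirected edge rather than two, leaving exactly $\kappa-1$ undirected edges: a tree. Connectivity is immediate because every non-isolated node owns an MVC arc (the first feature) and the transitive chain of the third feature threads it upward to the common top, while the cluster-id array $\mathcal{I}$ enforces this acyclic picture in code by refusing any second absorption of a node already carrying an id. Combining this with safety, $\varg_i$ is a spanning tree of its component built solely of safe edges, hence an MSF.

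The hard part will be the equal-weight case $\omega_{r}^{\ltimes}=\omega_{\ell}^{\ltimes}$ (the beam, where $Charge$ returns $3$), since there the minimum edge across a cut need not be unique and ``safe'' no longer singles out one edge; naively keeping both directional arcs of a beam could close a cycle, and several interlocking beams could in principle defeat the $\kappa-1$ count above, which is precisely the \emph{invert pitfall}. I would dispatch it by lifting the weights to a consistent strict total order, breaking each tie by the $Charge$-driven \emph{top-to-bottom, peer-in-beam} rule, under which every isolating cut acquires a unique minimum edge; the node stage then selects exactly the minima of this perturbed instance, whose MST is an MST of $G$, so acyclicity and the edge count carry over from the previous paragraph. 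The delicate point on which the argument truly turns is verifying that the $Charge$ rule induces a \emph{globally} consistent order, so that no component's selection closes a cycle across interlocking beams; this is where Theorem~\ref{t1}, by pinning each flotilla to a single converging top, does the decisive work.
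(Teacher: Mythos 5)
Your skeleton for the distinct-weight case is sound, and it is a genuinely different (and more classical) route than the paper's own proof, which merely case-splits on whether the starting beam sits at the top, middle, or bottom of a convergent sequence and asserts that minimum-cost absorption yields an MSF. But your argument has a genuine gap exactly where the paper claims its generality: tied weights. Your structural claim that the fleet model admits only terminal beam $2$-cycles, and hence the $\kappa-1$ edge count, fails once tied beams interlock: take a triangle $u,v,w$ with all three edges of weight $1$. Every node's MVC is $1$, all three edges are beams, the fleet model contains a $3$-cycle, and a ``functional graph'' reading gives $\kappa$ retained undirected edges, not $\kappa-1$. In the same situation your safety paragraph also thins out: per-edge safety (``each MVC edge lies in \emph{some} MST'') no longer yields the \emph{joint} containment in one MST that your stated invariant requires; with distinct weights joint containment is automatic because all per-node minima lie in the unique MST, but with ties it is precisely the point at issue.

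The repair you propose does not close this gap. The $Charge$ function returns the same value $3$ for every beam pair, so it induces no order whatsoever among tied edges --- ``breaking each tie by the $Charge$-driven rule'' is not well defined. And Theorem~\ref{t1} cannot ``do the decisive work'' you assign to it: that theorem governs strictly decreasing MVC sequences and explicitly sets aside the all-equal configuration (the ``infinite structure of beam with sharing a same number''), so it says nothing about which of several interlocking tied beams must be discarded. What actually prevents the triangle's third edge from being taken is purely procedural: the id array $\mathcal{I}$ refuses the absorption. To turn that into a proof you need an ingredient present in neither your sketch nor the paper: an exchange (blue-rule) argument, namely that each absorption adds the MVC edge of a not-yet-absorbed node $r$, which is a minimum-weight edge crossing the singleton cut $(\{r\},V\setminus\{r\})$, a cut respected by the current forest; that invariant tolerates ties natively and yields joint containment of all retained edges in one MST, making any a priori global perturbation unnecessary. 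Equivalently, a perturbation argument can only be run \emph{a posteriori}, chosen to agree with the run's actual tie-breaking, and its consistency is then exactly what must be verified rather than assumed.
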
 

\begin{proof}
By theorem\eqref{t1}, we suppose a flotilla with a convergent sequence $P$. Further assume a beam $\mathcal{B}$ as initial entity, then there are three cases: the beam may be sited at three positions on $P$ which is \emph{top}, \emph{mid} and \emph{bottom}.

Following our strategy on \emph{OAG}, the chase is starting at beam $\mathcal{B}$. By definition of MVC, a pure beam as self organizing obviously has been a MFS. However, beam $\mathcal{B}$ at either mid or top of sequence $P$, the operator should move from $\mathcal{B}$ and along to bottom or peer to peer. The course would be at minimum  cost to absorb nodes, then the result of component $\varg_i\in\varg$ should be a MSF.

When at the bottom of $P$, it is clear the operator merely does along peer to peer, the $\varg_i$ will still be a MSF. 

\end{proof}
~\newline
Certainly, by the lemma, the cut property of MST seems to give a sign that is to say while we recover the group $\varg$ of MSFs by the same strategy as it has been to nodes, by a similar way we should gain a MST. But there is a barrier, the merger course on $\varg$ may be an iterative process with many loops. The case could bring out a basis problem: given two subsets $\varg_i, \varg_j\in\varg$, by adding a minimum bridge that bestrides two trees, whether the new tree $\varg_i\oplus\varg_j$ is a MSF or MST? We shall use the below lemma to answer this question.

\begin{lemma}\label{l2}
It is reasonable to suppose the input instance $\varg$ as $\vert\varg\vert\geq 2$. After the method \emph{OAG} exercising on $\varg$, for an new flotilla $\varg^{\/*}=\varg_i\oplus\varg_j$ ($\varg_i, \varg_j\in \varg$), it is a MSF.
\end{lemma}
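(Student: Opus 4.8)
The plan is to recognise this statement as a restatement of the classical \emph{cut property} of minimum spanning trees, lifted from the node level to the cluster level. First I would set $V^{\/*}=V(\varg_i)\cup V(\varg_j)$ and observe, via Lemma\eqref{l1}, that $\varg_i$ and $\varg_j$ are each already a MSF over their own node sets; hence any deficiency of $\varg^{\/*}$ can reside only in the single bridge edge that $\oplus$ adjoins across the cut separating $V(\varg_i)$ from $V(\varg_j)$. The goal thus reduces to two sub-claims: that the bridge produced by OAG is a lightest edge crossing this cut, and that adjoining such a lightest crossing edge to two already-optimal forests yields an optimal forest.

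For the first sub-claim I would contract each component $\varg_i,\varg_j$ to a single \emph{super-boat} and re-run the fleet construction on the two contracted nodes, exactly as the second phase of OAG prescribes through the cluster id stored in $\mathcal{I}$. In this contracted instance every candidate bridge is an AWT whose weight is the cost of crossing the cut; computing the MVC $\omega^{\ltimes}$ at the super-boat level selects the minimum such crossing weight, and by Theorem\eqref{t1} the monotone chain $\Omega$ of these contracted MVCs converges at the \emph{top} of the flotilla, which is precisely a beam realised by the lightest bridge. Hence $\oplus$ can only attach a minimum-weight crossing edge.

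For the second sub-claim I would argue by exchange. Suppose toward a contradiction that some spanning forest $T$ over $V^{\/*}$ satisfies $W(T)<W(\varg^{\/*})$. Because $T$ must connect $V(\varg_i)$ to $V(\varg_j)$, it contains at least one crossing edge $f$; deleting $f$ splits $T$ into two pieces lying on the two sides of the cut. Since the chosen bridge $e$ is a lightest crossing edge, $\omega_e\leq\omega_f$, so replacing $f$ by $e$ gives a spanning forest no heavier than $T$; meanwhile, by Lemma\eqref{l1} the parts of $T$ interior to each side cannot be lighter than $\varg_i$ or $\varg_j$. Together these force $W(\varg^{\/*})\leq W(T)$, contradicting the assumption. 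As $e$ joins two vertex-disjoint trees, no cycle is created, consistent with the monotone growth $P^{\/*}\subseteq P\Rightarrow W(P^{\/*})\leq W(P)$ of the $4^{\text{th}}$ feature.

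The hard part will be the degenerate case the paper repeatedly flags: when several bridges across the cut share the \emph{same} minimum weight. Then the exchange $f\mapsto e$ need not be strict, and I must verify that \emph{any} selection OAG makes among equal-weight beams still yields a forest of identical total weight and introduces no cycle — which is exactly where the id array $\mathcal{I}$ earns its keep, since it forbids a second crossing edge from being absorbed once the two clusters carry a common id. Discharging this tie-breaking rigorously, rather than assuming distinct weights as Kruskal and Prim do, is the crux on which the claimed generality of the lemma rests.
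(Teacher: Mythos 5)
Your proof takes essentially the same route as the paper's: first acyclicity of $\varg^{\/*}$ because a single bridge joins two disjoint trees, then the identical exchange/comparison argument in which any competing spanning tree $\overline{\varg}=\overline{\varg}^{\prime}\cup\varepsilon^{\prime}$ is bounded below via $\omega_{\varepsilon^{\prime}}\geq\omega_{\varepsilon}$ (minimality of the bridge) and $\mathcal{T}(\overline{\varg}^{\prime})\geq\mathcal{T}(\varg_i\cup\varg_j)$ (optimality of the two components, your Lemma~1 step), giving $\mathcal{T}(\overline{\varg})\geq\mathcal{T}(\varg^{\/*})$. Your additions — the contraction argument that OAG actually selects a lightest crossing edge (the paper simply \emph{lets} $\omega_{\varepsilon}$ be minimal by fiat) and the explicit tie-breaking discussion — go beyond the paper's text, but be aware that you inherit the paper's own implicit restriction: your claim that deleting $f$ splits $T$ into pieces lying on the two sides of the cut (and hence that Lemma~1 bounds each piece) holds only when $T$ crosses the cut exactly once, which is precisely the unstated assumption in the paper's decomposition $\mathcal{T}(\overline{\varg})=\mathcal{T}(\overline{\varg}^{\prime})+\omega_{\varepsilon^{\prime}}$.
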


\begin{proof}
Given $\varg_i\cup \varg_j\subseteq\varg$. Assume an edge $\varepsilon$ incident to weight $\omega_{\varepsilon}$. We suppose edge $\varepsilon$ as a unique bridge to connect $\varg_i$ and $\varg_j$, and set the new form $\varg^{\/*}$ composed by $\varg_i, \varg_j$ and $\varepsilon$: $\varg^{\/*}=\varg_i\cup \varg_j\cup\varepsilon$. 

We firstly prove the form $\varg^{\/*}$ is a tree. Assume contrary there is at least a cycle on $\varg^{\/*}$. Given a nodes $u\in\varg_i$, if a traversal starts at $u$ and pass through edge $\varepsilon$ to reach another tree $\varg_j$, then is clear that there must be another bridge across the two trees $\varg_i$ and $\varg_j$, or else the traversal could be forced to pass through edge $\varepsilon$ repeatedly if to return back to node $u$. The conclusion is as such truth to swap positions of $\varg_i$ and $\varg_j$ in former proof.

We may let function $\mathcal{T}$ to figure the sum of weight of a tree such as $\mathcal{T}(\varg)=\sum_{e_i\in\varg}\omega_{e_i}$. We can let $\omega_{\varepsilon}$ is the minimum amid all bridges that are bestriding two trees. The total weight on tree new $\varg^{\/*}$ as:

\begin{flushleft}$\mathcal{T}(\varg^{\/*})=\mathcal{T}(\varg_i\cup\varg_j) + \omega_{\varepsilon}$.\end{flushleft}

Assume there is new MSF $\overline{\varg}$ distinct to tree $\varg^{\/*}$ and including all nodes that has been in two trees $\varg_i, \varg_j$. The total weight on $\overline{\varg}$ can be 
\begin{flushleft}$\mathcal{T}(\overline{\varg})=\mathcal{T}(\overline{\varg}^{\prime}) + \omega_{\varepsilon^{\prime}}$.\end{flushleft}

For two trees $\varg^{\/*}, \overline{\varg}$, it is obvious that the both edge or nodes quantities are same via they are tree. 

In above term, $\varepsilon^{\prime}$ can be set to a bridge, which two ends respectively are in $\varg_i$ and $\varg_j$ as well as the edge $\varepsilon$. By our premise about $\varg_i, \varg_j$ and $\varepsilon$, the total weight $\mathcal{T}(\overline{\varg}^{\prime})$ can but be $\geq\mathcal{T}(\varg_i\cup\varg_j)$; and furthermore there is $\omega_{\varepsilon^{\prime}}\geq\omega_{\varepsilon}$. Then we have conclusion $\mathcal{T}(\overline{\varg})\geq\mathcal{T}(\varg)$.

\end{proof}

The rest work is to take easy to comprehend similar to Seth\textquoteright s third stage\cite{8}. But there is a drag on our method that is about variable $counter$. By the $5^{th}$ feature above, the most quantity of clusters may be $counter =n/2$ where all of output clusters are including two nodes only. 

This detail of process can enforce us no choice to use an adjacent matrix with $n^{2}/2$ to store connected relation of all probability among these clusters. Consequently this drawback may induce the computing on cluster stage at complexity $O(n^2)$. In practice, it would bring out much more potential operations on memory, such as reinitializing, delete, update and etc to aggravate our implementation.

Therefore we need to reform OAG method. Moreover the breakthrough of the stalemate would be that the chase on node is not longer routinely on two directions and needing to record the relations of fleet model for any sorting. 

\subsection{Optimizing}
Given node $r$ and its own leaf set $L(r)$, underneath fleet model, we allow the subset $\mathcal{A}\subseteq L(r)$ consisted of components $\mathcal{J}, \mathcal{P}, \mathcal{S}$. The three components respectively contain those leaves sorting of \emph{boat}, \emph{beam} and \emph{towboat}. The compliment $L(r)\setminus\mathcal{A}$ will be trivial we take ignore-all.

In process of OAG method, the strike of chase is from $\mathcal{S}$ to $\mathcal{P},\mathcal{J}$ and stretches in $\mathcal{P}$. If this chase does at inverse strike, there is hereby emerged of invert pitfall which may damage our career.

If we view $\mathcal{A}$ as a stair then the component $\mathcal{P}$ likes a \emph{landing} amid the below $\mathcal{J}$ and up $\mathcal{S}$. By lemma\eqref{l2}, it gives us a surprise version that consider two MSFs and one\textquoteright s top side with MVCs connect to another\textquoteright s bottom. It is possible for them to merge together through a corridor on that band among two MSFs and, to become a new MSF. Then for set $\mathcal{A}$, components $\mathcal{P}$ and $\mathcal{S}$ with same MVC constructs a vicinal side of the below MSF. 

In fact, by strategy of OAG, the vicinal side at top of below MSF similarly is a rift to separate an original unified MSF, which cause is possibly made by algorithm oneself. We call this case \emph{fragmentality}.

Of course, we can be in another way to sort those candidates in an order of ascent. As though the minimum beam must be the top, but it is obvious that this patchwork can bring out a new cost maybe involving to $O(n^2)$ at most.

After we summarize the foregoing analysis about a form and its significance at a node, we can raise an \emph{inheritance system} to solve the invert pitfall appearing in the course of bottom to top. We describe the strategy in following.

\begin{enumerate}
\item We initially nominate any start node as an \emph{inheritor}. This title could be legal to trace upwards inside a flotilla and, may be imparted to another while a chasing task finishes at current inheritor.

\item In a practical process, an healthy inheritor must match the clause: $\mathcal{S}\cup\mathcal{P}\neq\varnothing$. Of course, the ill one is $\mathcal{S}\cup\mathcal{P}=\varnothing$. If no successor, in the next period of chase, there would be no upward again till chase over in current MSF. 

\item In common case, the direction of from top to bottom and peer to peer is the routine chase in a flotilla. 

\item By theorem\eqref{t1}, it knows the action of upward chase will stop at the top of flotilla. Therefore the inheritance system guarantees us to avoid the invert pitfall emerge.
\end{enumerate}
~\newline
We merely show the pseudo code about inheritance system in cluster stage:
\begin{enumerate}
\item To component $\varg_{i}=(c_i, \overline{\xi}_i)$ and $\varg_{i}\in\varg$, scans every node $r\in c_i$, then forwards to AWT over root $r$. Then after compares among those AWTs without \emph{arc} in $c_i$ to figure out MVC for each cluster $C_{\text{min}}[i]=\omega_{c_i}^{\ltimes}$.

\item To scan those above AWTs again. Then there are some several comparisons among native MVC $C_{\text{min}}[r]$, neighbors $C_{\text{min}}[\ell]$ and each bridge\textquoteright s incident weight $\omega_{r,\ell}$ to decide if to merge the objective cluster which is at the \emph{up}, \emph{horizon} and \emph{down} positions in graph $\varg$. This aggregation is an iterative progress by redo such operation till to no cluster for merger. Once a chasing done in any MSF $\varg_i$, we would write those nodes into a vehicle $V^{\/*}$ as a new cluster. That $V^{\/*}$ is temporary array with data structure of array $C$ alike. We can let the two arrays to swap memory address one another, so that they could alternatively work for store clusters. Of course, once an eligible arc is picked up, incident two endpoints would be record into graph $\overline{\xi}$ as having done in node stage.

\item The protected mechanism is same as in node stage: some nodes may be refreshed and possessed of a new cluster \emph{id} or else going on with the old one.
\end{enumerate}

It is obvious that the course of cluster merger is iterative within which for those nodes, some withdraw and some be left, so they would be treated as such again and again till there only a new cluster\textquoteright s born as the final result. That is the MST about given instance. On intuitive sense, the new measure should be lighter on fragmentality than OAG in node stage.

We only showed cluster stage, in effect, the node stage can be viewed of a cluster stage which every cluster is singleton which distinct nest incident to distinct bird. It is obvious for every iterative loop of glomeration that the runtime is $O(2E)$ at node stage. We call this method \emph{omni OAG}, abbr. \emph{oOAG}.

By $5^{th}$ feature, we can understand the worst case for conglomerating, it at most will go up to $\log n$ iterative loops that in each cluster merger, the new one whatever is just one-into-one; i.e. the each new cluster always contains two older ones, so that the flow work may be depicted into a binary tree alike. Hence the overall runtime complexity is $O(2E\log n)$ at worst case. 

Of course, there are other methods could aid us to speed up execution. In our process, we need to quicken the speed of producing and inquiry over MVCs for all nodes or clusters, then the \emph{Fabonacci heap}\cite{9} can acquire MVC in $O(1)$ by applying \emph{findmin} measure on those AWTs. 

The runtime will be reduced to $O((E+n)\log n)$ if we could omit the stage of producing MVC for each subgraph and cluster. But in process of merger, the relationship and MVC about those clusters are unceasingly changed of lots. Then the heap should be appropriately modified and tuned following the situation changing at each iterative loop. Since that, we consider $O((E+\log n)\log n)$ \cite{9} be more approach to practice for the worst case on updating Fabonacci heap

~\newline
\emph{A melioration.} A measure seems to refine the complexity. That is to cut graph in dynamical process which snips out those nodes which are not abutting to other clusters. I.e. root oneself with all of its own leaves together are staying in a same cluster. We settle an operator to filter out those nodes in the step of seeking MVC for every cluster. Since that, we can reduce every input of next loop. Consider the worst case on a complete graph that the size of each leaf set is $n-1$, we have a concise recursive function to figure the input for per loop.
\begin{flalign*}
\delta_{k+1}&=\delta_{k}-2^{k-1}n&\\
&\text{s.t.}\quad k\geq0;~\delta_{0}=E;~ \delta_{1}=E-n.&
\end{flalign*}
The sum is $\sum_{k\leq\log n}\delta_{k}=E\log n - o(n)$. For $o(n)\leq E$ then $E(\log n-1)=E\log(n/2)$ is the deterministic sum. But the improvement is \emph{extra limited}.

As though yielded such assertion, this measure reminds us that the survivors out of snipping in a loop, they are composed of the current \emph{vicinal edge} or \emph{peripheral points} for native cluster. Thus at different step of glomeration, by the measure, we can outline a surface of a new cluster but no necessary to occupy much more resources.

\subsection{The $k$ Value}
Sometimes for clustering, it is essential to firstly take the number of kernels of conglomerations. In above pseudo code, that is the variable $counter$ said of $k$ value in tradition. It can be solved of this problem by theorem\eqref{t1}: the kernel is a group of nodes all in beam possessed of $\mathcal{S}= \varnothing$ in each leaf set. This definition is naturally to involve your application, maybe you feel the pure beam with $\mathcal{J}\cup\mathcal{S}= \varnothing$ is yet as real kernel.

However, we develop a new method: the procedure will be implemented along the beam among those nodes we have said peer to peer. If meets any leaf set with $\vert\mathcal{S}\vert> 0$ then halts and jumps to another that have not been charged. The variable $counter$ will be increment to sum the amount of kernels by oneself adding one as it does in other methods. Hereby this subroutine runtime complexity still is $O(E)$ and should be settled in the process of making MVC in each local weight set without occupying more resources. And the kernels may be a group of MSFs with a data structure akin to graph $\varg$.

Moreover the consequence merger will readily be implemented from top to bottom which is completely to sweep the case of \emph{fragmentality}. This method at intuitive sense is similar to the \emph{density-based clustering algorithm} made by Ester, Martinet et al[1996]\cite{10}. But there is difference in the idea that by theorem\eqref{t1}. Since those MVCs in a kernel can share with a same number that may not be sure of the least one in whole data, just do in that flotilla, thereby the kernel is produced from some natural relation said of \emph{proximate} in cluster theory. That is certainly not to need to give a subjective number as the initial distance, but by the various \emph{resolutions} we can govern the producing of kernels. Finally we call this method \emph{kernel OAG}, abbr. kOAG.

\subsection{Experiment}
We select a lattice object to simulate an oOAG operation on an image. To given a pixel $u$ in latticing network, we supposed its neighbors are those suchlike sited at \emph{up, down, left, right} and four \emph{corners} on the virtual sides surround $u$, which construct a rectangle. Since that, a common root should own 8 neighbors near to oneself, unless those on edges of figure contiguous to ambient margin.

Our machine was a laptop with Intel I3 core, 4G memory and Win10 OS. The executive procedure was compiled by C++ at console platform. 

We let the number of points start at 1M $(n=10^{6})$ formed by square term $n=p^2$ where $p$ is the number of points at either of row and column on lattice. The number of 1M acts as benchmark $q_{0}$, that others $q_{i}$ is of $>q_{0}$ and the ratio $q_{i}/q_{0}$ as the sequence point at abscissa. We add the $p$ value with 500 by 500 till $p=3k$ ($k=10^{3}$), then the number of arcs is from round 8M to 72M. 

We use a set $Q=(1,2,\ldots,10)$, these nonzero natural numbers to assign those arcs as weight, there is a ratio of \emph{density} denoted by $\gamma = \vert Q\vert/\vert \tau\vert$. We give the figure 1 to show the practical runtimes on several levels of numbers.

\begin{center}
\includegraphics{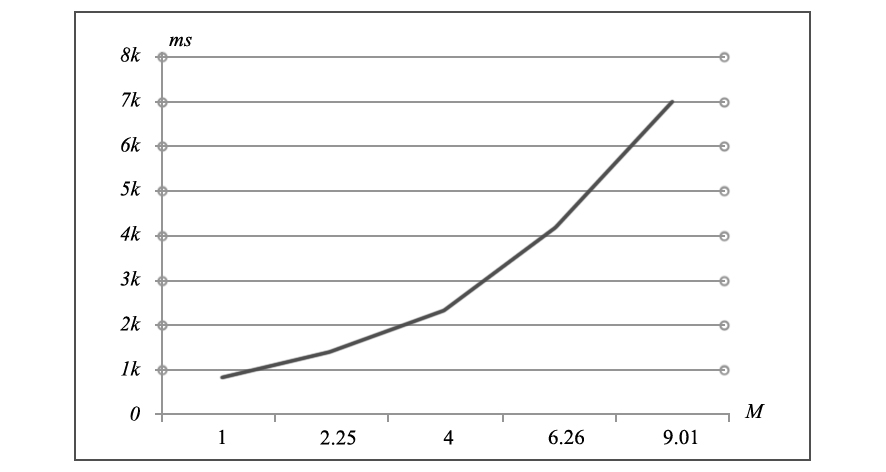}\\
Figure 1
\end{center} 

 The following table 1 shows some concrete numbers.

\begin{table}[h]\renewcommand{\arraystretch}{1.2}
\caption{\small{MST EXP. 1}}
~\newline
\centering
\begin{tabular}{|l | c|  r|}
\hline
$n=$\quad\quad\quad&\quad $k=$\quad\quad\quad&\quad \textbf{R.T.} (ms)\quad\quad\\
\hline
1\textbf{M}&\quad 1\,301&\quad 470\\
\hline
2.25\textbf{M}&\quad 1\,383&\quad 1\,110\\
\hline
4\textbf{M}&\quad 1\,323&\quad 2\,078\\
\hline
6.25\textbf{M}&\quad 1\,261&\quad 4\,093\\
\hline
9\textbf{M}&\quad 1\,256&\quad 7\,031\\
\hline
\end{tabular}
\end{table}

The results show we took less than 8 seconds to deal with approach 36M edges. Those $k$ values are showing the amount of clusters and, $k-1$ edges will be sought out at cluster stage. On the other hand, the five numbers are so closed that reflect the efficiency of aggregation is in increment with the scale of instance. 

Meanwhile we had a statistic analysis round this trial, it particularly refers to a phenomenon: the price, the mean of picking an eligible edge over many of arcs $A$ that had taken part in practical comparing. We use the ratio $n/A$ to indicate the efficiency of aggregation.

When sets the density $\gamma$ in a constant in trial and merely change number of points. The results show while $\vert V\vert$ come up to 1M, the ratio $n/A$ would close to 1. Another for $p=100$ the ratio is about $0.3$ almost even with $n/\vert\tau\vert$. Contrast to $k$ value, it in larger instance is less than small one. Such case is likewise coincide to the viewpoint or observation about efficiency of aggregation made by R. Tarjan\cite{5}. And the volume of density $\gamma$ actually is not distinctly apparent over to impact efficiency of aggregation. Certainly this assertion may be slapdash without many more data or theory to support.  

In fact, this is just a simple trial, but refers to a research way which in a stochastic system either absolute or approximate, it enables us to estimate the mean of weights on MST by those data.

\section{Summary}
Surely, our model cannot avoid the number $\log n$ of steps at the worst case. It is said of the optimizing work in today, justly our thought is not distinct to past for example the \emph{Fabonacci heap} or \emph{Soft heap} (was designed by Bernard Chazelle in 2000) and so forth, that attempted to reduce the effort of accessing to the data. The \cite{11} has done useful work at this aspect which by integrating and strengthening hardware and communication to optimize practical tactic.

Anyway, by integrating above methods, advanced memory technique and suitable deployments of computation, the overall complexity may be reduce to grade of $O(E)$. 

Besides the complexity, we showed an entire detail of course of a glomeration. This is importance to many applications as a generalized platform that means we can add many and many conditions on it to simulate a reality world. 

After all, MST problem is simple and basis as Seth\textquoteright s command\cite{8}, and may be as a key phase of solving other problems, into which they would be transformed and optimized.

\end{document}